\documentclass[runningheads]{llncs}
\usepackage[T1]{fontenc}

\usepackage{tikz}
\usepackage{listings}

\usepackage{amssymb}
\usepackage{amsmath}
\usepackage{listings}
\usepackage[utf8]{inputenc}
\usepackage{parskip}
\usepackage{hyperref}
\usepackage{bookmark}
\usepackage[linguistics]{forest}
\usepackage{tikz}
\usetikzlibrary{arrows.meta}
\usepackage{float}
\usepackage{transparent}

\usepackage{import}
\usepackage{pdfpages}
\usepackage{transparent}
\usepackage{xcolor}
\usepackage{xspace}
\usepackage{enumitem}
\usepackage{booktabs}

\newcommand{\spin}[0]{\textsc{Spin}\xspace}
\newcommand{\korg}[0]{\textsc{Tofu}\xspace}
\newcommand{\promela}[0]{\textsc{Promela}\xspace}

\usepackage{tcolorbox}
\tcbuselibrary{breakable}

\tcolorboxenvironment{definition}{
  colback=gray!8,
  colframe=gray!8,
  breakable,
  sharp corners,
  boxrule=0pt,
  left=6pt, right=6pt, top=6pt, bottom=6pt,
}

\newcommand{\ndmark}{\,\textcolor{gray!70}{\vrule width 2pt}\,}
\newenvironment{ndgroup}{%
  \par\vspace{2pt}\noindent
  \hspace{-4pt}%
  \begin{tikzpicture}[baseline=(content.north west)]
    \node[inner sep=0pt, outer sep=0pt, anchor=north west] (content) \bgroup
    \begin{minipage}{\dimexpr\linewidth-8pt}
}{%
    \end{minipage}\egroup;
    \draw[gray!60, line width=2pt]
      ([xshift=-6pt]content.north west) -- ([xshift=-6pt]content.south west);
  \end{tikzpicture}\vspace{2pt}
}

\makeatletter
\def\@spbegintheorem#1#2#3#4{\trivlist
    \item[\hskip\labelsep{#3#1\ #2\@thmcounterend}]#4\upshape}
\def\@spopargbegintheorem#1#2#3#4#5{\trivlist
      \item[\hskip\labelsep{#4#1\ #2}]{#4(#3)\@thmcounterend\ }#5\upshape}
\makeatother

\begin{document}
\title{Automated Channel Fault Analysis with \korg}
\author{Jacob Ginesin\inst{1,3} \and
Max von Hippel\inst{2} \and
Cristina Nita-Rotaru\inst{3}}
\authorrunning{J. Ginesin et al.}

\institute{Carnegie Mellon University \and
Benchify, Inc.
\and
Northeastern University
}

\maketitle

\begin{abstract}
Distributed protocols are the linchpin of the modern internet,
underpinning every internet service.
This has in turn motivated a massive body of research ensuring the security,
reliability, and performance of distributed protocols.
In these works, a wide-ranging assumption is that distributed
protocols operate over \textit{faulty} or
\textit{attacker-controlled} channels, where messages can be arbitrarily
inserted, dropped, replayed, or reordered.
Formal verification work targeting distributed protocols
typically defines its own notion of faulty or malicious channels,
then constructively proves their protocol is correct with respect to it.
In this work we take a fundamentally different approach:
we develop a rigorous methodology for
automatically conducting channel fault analysis
on distributed protocols, and we introduce \korg, a 
generalizable tool that implements our methodology. 
\korg provides sound, complete analysis, synthesizing channel fault-based 
attack traces on arbitrary
linear temporal logic (LTL) protocol specifications
or proving the absence of such through an exhaustive state-space search.
We demonstrate the applicability of \korg by employing it to study TCP.

\end{abstract}

\section{Introduction}
\label{sec:intro}

Distributed protocols are the linchpin of the modern internet,
representing the fundamental communication and coordination backbone
of all modern services built on top of it.
The vast importance of distributed protocols has motivated ample
research in ensuring their security, reliability, and performance.

In practice, the communication channels between distributed protocol
endpoints are imperfect: messages can be dropped due to
network congestion, replayed by adversaries, or reordered due to
multi-path routing. These channel-level faults are a source
of subtle bugs, denial-of-service vulnerabilities, and
safety violations in distributed systems.
Despite this, the landscape of tools for \textit{systematically}
analyzing channel faults remains fragmented and largely manual.

On the practical side, several tools exist for testing network
protocol implementations under adverse channels.
Scapy \cite{tools/scapy} 
is a Python-based packet manipulation library enabling
users to forge, send, and capture arbitrary network packets, Linux 
\texttt{netem} \cite{tools/netem} provides kernel-level network emulation, 
PacketDrill \cite{tools/packetdrill} supports testing of TCP/UDP/IP implementations
through injecting precisely-timed packet sequences, and TCP-Fuzz supports randomized testing
and fuzzing approaches.

These tools are invaluable in practice but share a fundamental
limitation: they provide no \textit{completeness} guarantees.
That is, when these tools terminate without finding a bug,
the user cannot conclude the absence of bugs.

On the formal methods side, there has been significant work verifying
distributed protocols under faulty communication channels, with the caveat that 
the capabilities of the attacker must be manually defined. von Hippel et al.
presented an automated attack synthesis framework targeting communication channels
for distributed systems, supporting an attacker that may only arbitrarily insert 
messages onto the victim channel \cite{conf/safecomp/Hippel20}. This attack synthesis framework was later used to analyze the
Stream Control Transmission Protocol, where 
Ginesin et al. modeled different types of attackers by specifying exact message types the different attackers may insert onto the channel \cite{conf/usenix/ginesin24}. 
Raft has been verified using mechanized proof assistants such as Rocq via the Verdi framework
\cite{conf/ccs/Wilcox15}, and Paxos has been verified using IronFleet \cite{conf/sosp/Hawblitzel15};
in both scenarios, ad-hoc presentations of channel faults are verified against.
Cryptographic protocol verifiers such as Tamarin \cite{tools/tamarin} and ProVerif \cite{tools/proverif}
provide automated analysis of cryptographic properties when processes communicate over an untrusted channel, but do not target channel-level faults. Model checkers such as \spin \cite{tools/spin} and TLA+ \cite{tools/tla} have been applied
to distributed protocols, but there exists very little
systematic support for reasoning directly about channel faults.

As it stands, the current formal methods-driven approaches 
for protocol validation lack support for automated channel
fault analysis, while testing-driven approaches remain 
fundamentally incomplete. To fill in this gap in the literature, 
we propose a general methodology for the automated discovery of
attack traces, counterexamples, and formal guarantees across various 
channel configurations. To realize our methodology we introduce \korg,
a highly generalizable tool for synthesizing attacks on the faulty 
communication channels of distributed protocols that supports attacker 
models capable of arbitrarily dropping, replaying, reordering messages.
\korg targets the communication channels between protocol endpoints
and synthesizes attacks to violate general linear temporal logic (LTL)
specifications.
\korg is designed to either synthesize an attack or prove
the absence of such via an exhaustive state-space search.
\korg is sound and complete: if there exists an attack, \korg will
find it, and \korg will never produce false positives.
\korg is also designed to be easy to use once the protocol model
is constructed: all a user must do is select the victim channels,
select the desired attacker models, and invoke \korg.

We summarize our contributions:
\begin{itemize}
\item We present generalizable gadgets representing attackers
      capable of dropping, replaying, and reordering messages
      on communication channels of distributed protocols,
      and we formally describe their construction.
\item We present \korg, a tool for synthesizing attacks against
      distributed communication protocols.
      \korg supports three general attacker models: an attacker
      that can drop, replay, or reorder messages on a channel.
\item We provide an overview of \korg and demonstrate its usage
      by walking through applying it to the transmission control protocol (TCP).
\end{itemize}

We release our code and models as open source at \url{https://github.com/JakeGinesin/tofu}.

\section{Preliminaries}
\label{sec:prelim}

In this section we introduce the mathematical foundations
underpinning \korg. We define linear temporal logic and B\"uchi
Automata, formalize
distributed protocols as compositions of communicating processes,
and state the attack synthesis problem.

\subsection{Linear Temporal Logic and the B\"uchi Automata}
\label{sub:ltl}

Linear Temporal Logic (LTL) \cite{conf/focs/pnueli77} is a modal logic for specifying
properties of infinite sequences of program states.
LTL extends propositional logic with temporal operators that
express how propositions evolve over time along an execution.

\begin{definition}[LTL Syntax]
\label{def:ltl-syntax}
Given a finite set of atomic propositions $AP$, LTL formulas
are defined inductively:
\[
\phi ::= \top \mid p \mid \neg \phi \mid \phi_1 \land \phi_2
       \mid \mathbf{X}\, \phi \mid \phi_1 \,\mathbf{U}\, \phi_2
\]
where $p \in AP$, $\mathbf{X}$ (\textit{next}) asserts $\phi$ holds
in the next state, and $\mathbf{U}$ (\textit{until}) asserts
$\phi_1$ holds continuously until $\phi_2$ holds.

The standard LTL derived operators are defined as:
$\mathbf{F}\,\phi \equiv \top\,\mathbf{U}\,\phi$
(\textit{eventually}, asserting $\phi$ holds at some future state),
$\mathbf{G}\,\phi \equiv \neg\mathbf{F}\,\neg\phi$
(\textit{always}, asserting $\phi$ holds at every future state),
and $\phi_1 \Rightarrow \phi_2 \equiv \neg\phi_1 \lor \phi_2$
(\textit{implies}).
\end{definition}

LTL properties are commonly classified into \textit{safety}
and \textit{liveness}. Safety properties assert
that ``a bad thing never happens'' and can be violated by
finite prefixes, while liveness properties assert that
``a good thing eventually happens'' and require infinite
traces to be violated. This distinction is important for
\korg's gadget design, as we discuss in
Section~\ref{sec:gadgets}.

We say a program $P$ \textit{models} a property $\phi$,
written $P \models \phi$, if $\phi$ holds over every infinite
execution trace of $P$. LTL model checking, the problem of
deciding whether $P \models \phi$ for a finite-state $P$, 
is decidable, and model checkers will always terminate given
sufficient resources.

LTL is fundamentally equivalent to the B\"uchi
Automaton, which expresses languages over infinite words. B\"uchi 
Automata are defined as follows.

\begin{definition}[B\"uchi Automaton]
\label{def:ba}
A \textit{B\"uchi automaton} (BA) is a tuple
$B = (Q, \Sigma, \delta, Q_0, F)$ where:
\begin{itemize}
  \item $Q$ is a finite set of states,
  \item $\Sigma$ is a finite alphabet,
  \item $\delta \subseteq Q \times \Sigma \times Q$ is a
        transition relation,
  \item $Q_0 \subseteq Q$ is a set of initial states,
  \item $F \subseteq Q$ is a set of \textit{accepting} states.
\end{itemize}
A \textit{run} of $B$ on an infinite word
$w = a_0 a_1 a_2 \cdots \in \Sigma^\omega$ is an infinite
sequence of states $q_0, q_1, q_2, \ldots$ such that
$q_0 \in Q_0$ and $(q_i, a_i, q_{i+1}) \in \delta$ for all
$i \geq 0$. The run is \textit{accepting} if it visits
states in $F$ infinitely often. The \textit{language} of $B$,
denoted $\mathcal{L}(B)$, is the set of all infinite words
for which $B$ has an accepting run.
\end{definition}

Model checkers such as \spin are fundamentally reliant on the
equivalence of LTL and B\"uchi Automata to implement LTL model checking \cite{tools/spin}. 
In LTL model checking, a program is modeled as a B\"uchi Automata, and the 
LTL property of interest is negated then converted to another B\"uchi Automata -- finally, both automata 
are intersected and exhaustively searched for accepting runs \cite{conf/lics/Vardi86}.

The execution semantics, as well as the formal equivalence relation between LTL and B\"uchi Automata,
are elaborated upon in Appendix Section \ref{sec:LTL-BA-SEC}.

\subsection{Processes and Channels}
\label{sub:processes}

We formalize distributed protocol components as
\textit{processes} --- labeled transition systems augmented
with input/output distinctions on transitions, following
the distributed protocol synthesis framework of Alur and Tripakis \cite{acm/Alur17}.

\begin{definition}[Process]
\label{def:process}
A \textit{process} is a tuple
$P = \langle AP, I, O, S, s_0, T, L \rangle$ where
$AP$ is a finite set of atomic propositions;
$I$ and $O$ are finite, disjoint sets of \textit{input} and \textit{output} actions, respectively;
$S$ is a finite set of states with initial state $s_0 \in S$;
$T \subseteq S \times (I \cup O \cup \{\tau\}) \times S$ is the transition relation, where $\tau \notin I \cup O$ denotes a silent internal action; and
$L : S \to 2^{AP}$ is a labeling function mapping each state to the set of atomic propositions true in that state.
A transition $(s, x, s') \in T$ is an \textit{input transition}
if $x \in I$, an \textit{output transition} if $x \in O$,
and an \textit{internal transition} if $x = \tau$.
\end{definition}

Communication between processes occurs over \textit{channels}.
Channels mediate all inter-process communication by providing
buffered, typed message conduits.

\begin{definition}[Channel Process]
\label{def:channel}
A \textit{channel} is a tuple $c = (M, k)$ where $M$ is a
finite \textit{message domain} (the set of messages that
can be transmitted over $c$) and $k \in \mathbb{N}$ is the
\textit{buffer capacity}. The state of a channel at any
point in an execution is a sequence $\mathbf{b} \in M^{\leq k}$
(i.e., a sequence over $M$ of length at most $k$),
representing the contents of the channel's FIFO buffer.

Channels induce specific input and output actions on the
processes that use them. For a channel $c = (M, k)$, we
define the associated action sets:
 \begin{align*}
  \mathsf{Send}(c) &= \{ c\, ! \, m \mid m \in M \}  \\
  \mathsf{Recv}(c) &= \{ c \,? \, m \mid m \in M \}   \\
  \mathsf{Peek}(c) &= \{ c \, ? \, \langle m \rangle \mid m \in M \}
\end{align*}
where $c\,!\,m$ denotes placing message $m$ onto the tail of
$c$'s buffer (enabled when $|\mathbf{b}| < k$),
$c\,?\,m$ denotes removing message $m$ from the head of
$c$'s buffer (enabled when $\mathsf{head}(\mathbf{b}) = m$),
and $c\,?\,\langle m \rangle$ denotes \textit{non-destructively}
reading message $m$ from the head of $c$'s buffer without
removing it (enabled when $\mathsf{head}(\mathbf{b}) = m$).
For a sending process, $\mathsf{Send}(c) \subseteq O$;
for a receiving process, $\mathsf{Recv}(c) \subseteq I$.
The non-destructive read $\mathsf{Peek}(c)$ is an input
action that does not modify the channel state.

\end{definition}

In general, a \textit{distributed protocol model} consists of a finite
collection of processes $P_1, \ldots, P_n$ communicating
over a finite set of channels $c_1, \ldots, c_r$. To facilitate this, we define
a notion of \textit{asynchronous composition} of processes 
that models concurrent execution.

\begin{definition}[Asynchronous Composition]
\label{def:composition}
Given processes $P = \langle AP_P, I_P, O_P, S_P, s_0^P, T_P, L_P \rangle$
and $Q = \langle AP_Q, I_Q, O_Q, S_Q, s_0^Q, T_Q, L_Q \rangle$
such that $O_P \cap O_Q = \emptyset$ and $AP_P \cap AP_Q = \emptyset$,
their \textit{asynchronous composition} $P \parallel Q$ is the process
$\langle AP_P \cup AP_Q,\;
(I_P \cup I_Q) \setminus (O_P \cup O_Q),\;
O_P \cup O_Q,\;
S_P \times S_Q,\;
(s_0^P, s_0^Q),\;
T_\parallel,\;
L_\parallel \rangle$
where $L_\parallel(s_P, s_Q) = L_P(s_P) \cup L_Q(s_Q)$ and
$T_\parallel$ consists of:
\begin{itemize}[nosep,leftmargin=1.5em]
  \item \textbf{Left move.}
    $((s_P, s_Q),\; \alpha,\; (s_P', s_Q)) \in T_\parallel$
    if $(s_P, \alpha, s_P') \in T_P$
    and $\alpha \notin I_Q \cup O_Q$.
  \item \textbf{Right move.}
    $((s_P, s_Q),\; \alpha,\; (s_P, s_Q')) \in T_\parallel$
    if $(s_Q, \alpha, s_Q') \in T_Q$
    and $\alpha \notin I_P \cup O_P$.
  \item \textbf{Synchronization.}
    $((s_P, s_Q),\; \alpha,\; (s_P', s_Q')) \in T_\parallel$
    if $(s_P, \alpha, s_P') \in T_P$ and $(s_Q, \alpha, s_Q') \in T_Q$
    for $\alpha \in (O_P \cap I_Q) \cup (O_Q \cap I_P)$.
\end{itemize}
Since $\tau \notin I \cup O$ for any process (Definition~\ref{def:process}),
internal transitions are always local moves: a $\tau$-transition of $P$
satisfies the left-move condition (as $\tau \notin I_Q \cup O_Q$),
and symmetrically for $Q$.
Composition generalizes to $n$ processes:
$P_1 \parallel \cdots \parallel P_n$ communicating over
channels $c_1, \ldots, c_r$, where each process's I/O
actions are drawn from the channel action sets of
Definition~\ref{def:channel}. We call such a collection a
\textit{distributed protocol model}.
\end{definition}

Nondeterminism in the composite system arises directly from
$T_\parallel$: whenever multiple transitions are enabled in a
composite state $(s_P, s_Q)$, the system may take any of them.
In particular, the nondeterministic choices within the gadgets
of Section~\ref{sec:gadgets} are instances of
multiple enabled transitions from a single state,
and model checkers will exhaustively explore all such interleavings.

\subsection{Process--B\"uchi Automaton Correspondence}
\label{sub:proc-ba}

A key technical requirement for \korg is establishing that
the process formalism corresponds precisely to the B\"uchi
automata semantics used by the \spin model checker.

\begin{theorem}
\label{thm:proc-ba}
Every process $P = \langle AP, I, O, S, s_0, T, L \rangle$
directly corresponds to a B\"uchi automaton, and vice versa.
\end{theorem}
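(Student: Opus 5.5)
We make the informal phrase ``directly corresponds'' precise by giving two explicit constructions and checking that each preserves the relevant trace semantics. We read a process $P$ as generating infinite runs $s_0 \xrightarrow{x_0} s_1 \xrightarrow{x_1} \cdots$ with $(s_i,x_i,s_{i+1}) \in T$. Its observable behaviour is either the action word $x_0x_1\cdots$ or the label word $L(s_0)L(s_1)\cdots \in (2^{AP})^\omega$. We handle both views with the same construction.

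\textbf{From processes to automata.} Given $P = \langle AP, I, O, S, s_0, T, L\rangle$, we define
\[
B_P = (S,\; \Sigma,\; \delta,\; \{s_0\},\; S), \qquad \Sigma = I \cup O \cup \{\tau\}, \qquad \delta = T.
\]
Since every state is accepting, a run of $B_P$ is accepting iff it is infinite. Hence $\mathcal{L}(B_P)$ is exactly the set of infinite action traces of $P$. For the state-labelled view (the one \spin uses for LTL), we instead take $\Sigma = 2^{AP}$ and put $(s, L(s), s') \in \delta$ whenever $(s,x,s') \in T$ for some $x$. Then the accepted words are exactly the label sequences of infinite executions of $P$.

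To make $P \models \phi$ coincide with $\mathcal{L}(B_P) \cap \mathcal{L}(B_{\neg\phi}) = \emptyset$, we appeal to the LTL--B\"uchi equivalence recalled in Section~\ref{sub:ltl} and the appendix. Finiteness of $S$ and of $I \cup O$ gives finiteness of $Q$ and $\Sigma$, so $B_P$ really is a B\"uchi automaton.

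\textbf{From automata to processes.} Given $B = (Q,\Sigma,\delta,Q_0,F)$, we build a process in three steps:
\begin{enumerate}
\item Add a fresh initial state $\hat q$ with a $\tau$-transition to each $q \in Q_0$. This handles multiple initial states.
\item Take $S = Q \cup \{\hat q\}$. Partition $\Sigma$ arbitrarily into disjoint sets $I$ and $O$ (for instance $O = \Sigma$, $I = \emptyset$), and let $T = \delta$ together with the added $\tau$-moves.
\item Encode acceptance by a fresh proposition $\mathit{acc}$, setting $AP = \{\mathit{acc}\}$ and $L(q) = \{\mathit{acc}\}$ iff $q \in F$.
\end{enumerate}
Then the accepting runs of $B$ are exactly those infinite executions of $P_B$ that satisfy $\mathbf{G}\mathbf{F}\,\mathit{acc}$ once the leading $\tau$ step is dropped. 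In this sense $P_B$ together with its labelling is the same object as $B$.

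The two constructions compose to the identity up to the added initial state and the choice of $I/O$ partition. We will state this as the precise content of ``corresponds''.

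\textbf{Main obstacle.} The main difficulty is the mismatch in acceptance. Processes have no acceptance condition, so the reverse direction cannot give language equality on the nose. We plan to resolve this by letting the correspondence carry the fairness/acceptance predicate through the labelling. We will argue that this is exactly how \spin treats never-claims: the accepting labels are checked with a $\mathbf{G}\mathbf{F}$ condition on the product. A smaller point is the treatment of finite, deadlocking executions. We will either follow \spin's convention of stuttering on terminal states, by adding a $\tau$ self-loop at deadlocked states, or restrict attention to infinite runs, and note that the choice does not affect the languages above.
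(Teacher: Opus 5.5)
\textbf{Gap: the round-trip claim.} Your two translations are sound, and your backward one is the paper's construction: a fresh initial state with $\tau$-edges into $Q_0$, and $AP=\{\mathit{acc}\}$ marking $F$. The paper fixes $I=\Sigma$, $O=\emptyset$, which is immaterial. The step that fails is the one you designate as the precise content of ``corresponds'': that the two maps compose to the identity up to the added initial state and the $I/O$ partition.

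For $B\mapsto P_B\mapsto B_{P_B}$, your forward map makes every state accepting. So $F$, which survives in $P_B$ only as the label $\mathit{acc}$, is thrown away. $\mathcal{L}(B_{P_B})$ is $\tau$ followed by every word on which $B$ has \emph{some} infinite run, and this in general strictly contains $\tau\cdot\mathcal{L}(B)$.

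For $P\mapsto B_P\mapsto P_{B_P}$, the backward map only ever produces the single label $\mathit{acc}$, here holding on every original state. So $AP$ and $L$ are lost, and in your $2^{AP}$-alphabet variant the action names are lost too. Moreover, unless you special-case it, $\tau$ is now an ordinary letter placed in $I$ or $O$, contradicting $\tau\notin I\cup O$. Formalized this way, the statement you set out to prove is false.

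\textbf{Repair.} Drop the round-trip claim and take as the content what the paper's proof establishes: each translation preserves the relevant semantics. The infinite traces of $P$ are exactly $\mathcal{L}(B_P)$, and the accepting runs of $B$ are exactly the executions of $P_B$ satisfying $\mathbf{G}\mathbf{F}\,\mathit{acc}$ after the leading $\tau$. Your constructions already show both.

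If you want one round trip, let the forward map read acceptance off a label, $F=\{s : p\in L(s)\}$ for a distinguished $p$, as the paper does. Then $B\mapsto P_B\mapsto B$ is the identity up to the fresh initial state. $P\mapsto B\mapsto P$ still cannot be, since a B\"uchi automaton has nowhere to store a multi-proposition labelling.

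\textbf{Comparison of forward maps.} Apart from this, your all-accepting (or $2^{AP}$-labelled) forward map is a reasonable alternative to the paper's. It is the standard system automaton of the Vardi--Wolper reduction. Its language is exactly the trace set over which $P\models\phi$ is defined, which is what Theorem~\ref{thm:easp-pspace} actually needs. It also applies when $AP=\emptyset$, as for the channel processes $Q_i$ of Definition~\ref{def:threat}, where the paper's distinguished $p\in AP$ does not exist.
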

\vspace{-1.25em}
\begin{proof}
Provided in Appendix Section \ref{sub:expanded-proofs}.    
\end{proof}
\vspace{-0.7em}

This correspondence ensures that reasoning about processes
within the attack synthesis framework directly reduces
to B\"uchi Automata operations in \spin in sub-polynomial time.

\subsection{Attack Synthesis}
\label{sub:attack-synth}

\textit{LTL attack synthesis} \cite{conf/safecomp/Hippel20} is the dual of LTL program synthesis.
In program synthesis, given an LTL specification $\phi$,
the goal is to derive a program $P$ such that $P \models \phi$.
In attack synthesis, the problem is inverted: given a system
$P$ and a property $\phi$ such that $P \models \phi$, we ask
whether there exists some process $A$, the \textit{attacker} process,
such that $(P \parallel A) \not\models \phi$.

\begin{definition}[Threat Model]
\label{def:threat}
A \textit{threat model} is a tuple $(P, (Q_i)_{i=0}^m, \phi)$ where:
\begin{itemize}
  \item $P, Q_0, \ldots, Q_m$ are processes,
  \item each $Q_i$ has $AP = \emptyset$,
  \item $\phi$ is an LTL formula such that
        $P \parallel Q_0 \parallel \cdots \parallel Q_m \models \phi$,
  \item the composite system has at least one infinite run
        (non-triviality).
\end{itemize}
\end{definition}

The \textit{attack synthesis problem} (R-$\exists$ASP) asks:
given a threat model $(P, (Q_i)_{i=0}^m, \phi)$, does there
exist an attacker $A$ whose I/O interface is a subset of
that of the $Q_i$ such that
$P \parallel \text{Daisy}(Q_0) \parallel \cdots
\parallel \text{Daisy}(Q_m) \not\models \phi$?
Here, $\text{Daisy}(Q_i)$ is a gadget that nondeterministically
exercises the full I/O interface of $Q_i$, as defined by
von Hippel et al. \cite{conf/safecomp/Hippel20}.

In the general attack synthesis framework, the $Q_i$ are
arbitrary environment processes; however, in \korg, the $Q_i$
are instantiated as the channel processes mediating
communication between the protocol endpoints in $P$.
That is, each $Q_i$ models a communication channel
(Definition~\ref{def:channel}), and \korg synthesizes
attackers that replace one or more $Q_i$ with fault
gadgets exercising the same I/O interface. 
Given a user-selected subset of victim channels
$\{Q_{i_1}, \ldots, Q_{i_j}\} \subseteq \{Q_0, \ldots, Q_m\}$, and 
letting $\text{Gadget}(Q_j)$ denote the fault gadget
(Section~\ref{sec:gadgets}) synthesized by \korg for
channel $Q_j$, \korg then checks whether:
\[
  P \parallel \text{Gadget}(Q_0) \parallel \cdots
  \parallel \text{Gadget}(Q_j) \not\models \phi
\]
Fundamentally, \korg is a synthesizer for such a $\text{Gadget}(Q_j)$; 
model checking is then leveraged to check whether $P \parallel \text{Gadget}(Q_j) \models \phi$.
Additionally, using the process-B\"uchi Automata correspondence as described in Theorem~\ref{thm:proc-ba}, we can 
establish a complexity result for the $R-\exists$ASP problem, which \korg inherits:

\begin{theorem}
\label{thm:easp-pspace}
The R-$\exists$ASP problem is reducible to B\"uchi Automata
language inclusion, and is therefore in PSPACE.
\end{theorem}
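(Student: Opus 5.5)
We reduce the question $P \parallel \text{Daisy}(Q_0) \parallel \cdots \parallel \text{Daisy}(Q_m) \not\models \phi$ to a B\"uchi language inclusion (equivalently, non-emptiness) check, and then invoke the classical PSPACE bound. The first step uses Theorem~\ref{thm:proc-ba}. Each of $P$ and the gadgets $\text{Daisy}(Q_i)$ is a process, and so is their asynchronous composition (Definition~\ref{def:composition}), since composition is closed under the process formalism. Applying the correspondence to the composite $C = P \parallel \text{Daisy}(Q_0) \parallel \cdots \parallel \text{Daisy}(Q_m)$ gives a B\"uchi automaton $B_C$ with the following properties:
\begin{itemize}
\item the alphabet is $2^{AP}$;
\item the states are $S_C$;
\item every state is accepting;
\item $\mathcal{L}(B_C)$ is exactly the set of label sequences $L(s_0)L(s_1)\cdots$ of infinite runs of $C$.
\end{itemize}
Because each $\text{Daisy}(Q_i)$ is finite (it only cycles through the finite I/O interface of $Q_i$) and channels have bounded buffers $M^{\le k}$, the state space of $C$ is finite. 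Its size is the product of the component sizes, so $B_C$ is a finite automaton whose size is polynomial in the explicit process descriptions, or exponential in the number of components if they are given succinctly.

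Next, we translate $\phi$ into a B\"uchi automaton $B_\phi$ with $\mathcal{L}(B_\phi) = \{w : w \models \phi\}$, using the LTL--BA equivalence recalled in Section~\ref{sub:ltl} (Vardi--Wolper). By the definition of $\models$, $C \models \phi$ holds iff $\mathcal{L}(B_C) \subseteq \mathcal{L}(B_\phi)$. Hence R-$\exists$ASP, which asks whether $C \not\models \phi$, is exactly the complement of a B\"uchi language inclusion instance. This is the claimed reduction. It also explains why the existential quantifier over attackers $A$ disappears: by the construction of von Hippel et al., $\text{Daisy}(Q_i)$ nondeterministically exercises every behaviour of any $A$ restricted to the $Q_i$ interface. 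So some attacker violates $\phi$ iff the Daisy composite has a violating run. We will cite that result rather than reprove it.

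For the complexity bound, we avoid general inclusion, which needs complementing $B_\phi$, and use the LTL special case instead. $\mathcal{L}(B_C) \not\subseteq \mathcal{L}(B_\phi)$ iff $\mathcal{L}(B_C) \cap \mathcal{L}(B_{\neg\phi}) \neq \emptyset$, where $B_{\neg\phi}$ has $2^{O(|\phi|)}$ states. Non-emptiness of the product automaton amounts to finding a reachable accepting lasso, which is NL in the product size. Running it on the fly, with the states of $B_{\neg\phi}$ and of the composite represented in polynomial space, gives an NPSPACE procedure, and Savitch's theorem places it in PSPACE. PSPACE is closed under complement, so deciding $C \not\models \phi$ is in PSPACE as well. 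Alternatively, we can quote directly that B\"uchi inclusion is PSPACE-complete, which suffices when the inputs are given as automata.

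The main obstacle is the size bookkeeping. We must confirm that the product $S_P \times \prod_i S_{\text{Daisy}(Q_i)}$, together with the channel contents, can be traversed in space polynomial in the input, and that the Theorem~\ref{thm:proc-ba} translation does not blow up beyond a polynomial. We handle this by never materializing $B_C$: each composite state is stored as a tuple of local states plus bounded buffers, and the transition relation of Definition~\ref{def:composition} is checked locally.
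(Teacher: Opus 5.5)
Your reduction is the paper's. You turn the composite into a B\"uchi automaton and $\phi$ into $B_\phi$, and note that $C \models \phi$ iff $\mathcal{L}(B_C)\subseteq\mathcal{L}(B_\phi)$. The paper then simply quotes PSPACE-completeness of B\"uchi inclusion.

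\textbf{Complexity bound.} Here you take a different route, and it gains something real. The inclusion algorithm uses space polynomial in the size of the automata, but $B_\phi$ can have $2^{|\phi|}$ states. Read literally, the paper's citation therefore gives only PSPACE relative to that exponentially larger object. Your on-the-fly search for an accepting lasso in the product with $B_{\neg\phi}$ avoids complementation entirely. It is NL in the product size, hence NPSPACE in the actual input, and PSPACE by Savitch. This yields space polynomial in $|\phi|$ and in the component descriptions, which is the bound one actually wants. Your aside that the product is ``polynomial in the explicit process descriptions'' holds only for a fixed number of components, but your on-the-fly treatment makes that moot.

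\textbf{Alphabet mismatch.} Your version also repairs a problem in the paper's sketch. The automaton built in the proof of Theorem~\ref{thm:proc-ba} reads actions $I\cup O\cup\{\tau\}$ and accepts via a distinguished proposition $p$. Its language therefore cannot be compared with $\mathcal{L}(B_\phi)\subseteq(2^{AP})^\omega$. Restricting to runs that visit $p$ infinitely often would also not match $\models$, which quantifies over all infinite runs. The automaton you describe is the right one: alphabet $2^{AP}$, every state accepting, language the label sequences of infinite runs. However, it is the standard Kripke-structure-to-B\"uchi translation, not what Theorem~\ref{thm:proc-ba} constructs. You should present it as your own direct construction rather than attribute it to that theorem.

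\textbf{The existential over attackers.} You state explicitly that the quantifier over $A$ is discharged by the Daisy construction of von Hippel et al. The paper's sketch leaves this step implicit.
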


\begin{proof}[Proof sketch]
By Theorem~\ref{thm:proc-ba}, all processes in the threat model
correspond to B\"uchi Automata. Checking
$P \parallel \text{Daisy}(Q_0) \parallel \cdots
\parallel \text{Daisy}(Q_m) \models \phi$
reduces to checking whether
$\mathcal{L}(BA_P \parallel BA_{\text{Daisy}(Q_0)} \parallel
\cdots \parallel BA_{\text{Daisy}(Q_m)}) \subseteq
\mathcal{L}(BA_\phi)$,
which is B\"uchi Automata language inclusion, a problem well-known to be
PSPACE-complete \cite{journal/tcs/sistla87}. 
\end{proof}

\begin{remark}
  It is always the case that for a gadget \text{Gadget}($Q$) 
  synthesized by \korg, $\text{Gadget}(Q) \subseteq \text{Daisy}(Q)$.
  Therefore, all model checking operations dispatched by \korg 
  have a complexity of PSPACE. 

\end{remark}

\section{Channel Fault Gadgets}
\label{sec:gadgets}

In this section, we introduce our methodology for the automatic
synthesis of drop, replay, and reorder attacks on distributed
protocols. We design our gadgets in \textit{generality}: they
can be applied to arbitrary communication channels, agnostic
to the message types actually transmitted. These gadgets are
designed to be composed with the protocol processes under
analysis, following the gadgetry-based attack synthesis
framework of von Hippel et al. \cite{conf/safecomp/Hippel20}.

We now define the drop, replay, and reorder gadgets as
processes (Definition~\ref{def:process}). Each gadget targets
a victim channel $c = (M, k)$ with a user-specified fault
limit $\ell \in \mathbb{N}$.
We highlight sets of transitions that may be
exercised \emph{nondeterministically} from the same state
with a vertical bar on the left margin (\ndmark{}).
For clarity, we write transitions as arrows
$s \xrightarrow{\alpha} s'$ rather than as three-tuples
$(s, \alpha, s')$, where $s$ is the source state, $\alpha$
is the action label, and $s'$ is the target state.

\subsection{Drop Gadget}
\label{sub:drop-gadget}

The drop gadget silently consumes up to $\ell$ messages from $c$.

\begin{definition}[Drop Gadget]
\label{def:drop}
The drop gadget $D_\ell(c)$ is a process
$\langle AP, I, O, S, s_0, T, L \rangle$ where:
\begin{itemize}[nosep,leftmargin=1.5em]
  \item $AP = \{\mathit{done}\}$
  \item $I = \mathsf{Recv}(c) \cup \{\mathsf{skip}(c)\}$
  \item $O = \emptyset$
  \item $S = (\{\textsc{Main}\} \times \{0, \ldots, \ell\})
              \;\cup\; \{\textsc{End}\}$
  \item $s_0 = (\textsc{Main}, \ell)$
  \item $L(s) = \{\mathit{done}\}$ iff $s = \textsc{End}$,
        $\emptyset$ otherwise
\end{itemize}

$T$ consists of the following transitions for all $m \in M$:

\smallskip\noindent\textbf{Main Phase:}\vspace{-1.0\baselineskip}
\begin{ndgroup}
\begin{enumerate}
  \item \textbf{Drop.}
    $(\textsc{Main}, n)
    \xrightarrow{c\,?\,m}
    (\textsc{Main}, n{-}1)$
    for $n > 0$.
  \item \textbf{Pass.}
    $(\textsc{Main}, n)
    \xrightarrow{\mathsf{skip}(c)}
    (\textsc{Main}, n)$.
  \item \textbf{Terminate.}
    $(\textsc{Main}, n)
    \xrightarrow{\tau}
    \textsc{End}$.
\end{enumerate}
\end{ndgroup}

\smallskip\noindent\textbf{Break Phase:}
\begin{enumerate}
  \setcounter{enumi}{3}
  \item \textbf{End.}
    $\textsc{End}
    \xrightarrow{\tau}
    \textsc{End}$.
\end{enumerate}
\end{definition}

The model checker explores all dropping strategies up to $\ell$
via the nondeterministic choice between \textbf{Drop},
\textbf{Pass}, and \textbf{Terminate}. \textbf{Pass} prevents
the gadget from trivially violating liveness by starving the
recipient.

\subsection{Replay Gadget}
\label{sub:replay-gadget}

The replay gadget observes messages on $c$ via non-destructive
reads and later replays copies onto $c$.

\begin{definition}[Replay Gadget]
\label{def:replay}
The replay gadget $R_\ell(c)$ is a process
$\langle AP, I, O, S, s_0, T, L \rangle$ where:
\begin{itemize}[nosep,leftmargin=1.5em]
  \item $AP = \{\mathit{done}\}$
  \item $I = \mathsf{Peek}(c) \cup \{\mathsf{skip}(c)\}$
  \item $O = \mathsf{Send}(c)$
  \item $S = (\{\textsc{Consume}\} \times \{0, \ldots, \ell\}
              \times \mathsf{Buf}_\ell(M))
         \;\cup\;
              (\{\textsc{Replay}\} \times \{0, \ldots, \ell\}
              \times \mathsf{Buf}_\ell(M))
         \;\cup\; \{\textsc{End}\}$
  \item $s_0 = (\textsc{Consume}, \ell, \varepsilon)$
  \item $L(s) = \{\mathit{done}\}$ iff $s = \textsc{End}$,
        $\emptyset$ otherwise
\end{itemize}

$T$ consists of the following transitions for all $m \in M$:

\smallskip\noindent\textbf{Consume Phase:}\vspace{-0.75\baselineskip}
\begin{ndgroup}
\begin{enumerate}[nosep,leftmargin=1.5em]
  \item \textbf{Observe.}
    $(\textsc{Consume}, n, b)
    \xrightarrow{c\,?\,\langle m \rangle}
    (\textsc{Consume}, n{-}1, b \cdot m)$
    for $n > 1$, $|b| < \ell$.
  \item \textbf{Observe-then-replay.}
    $(\textsc{Consume}, n, b)
    \xrightarrow{c\,?\,\langle m \rangle}
    (\textsc{Replay}, \ell, b \cdot m)$
    for $n \geq 1$, $|b| < \ell$.
  \item \textbf{Begin-Replay.}
    $(\textsc{Consume}, 0, b)
    \xrightarrow{\tau}
    (\textsc{Replay}, \ell, b)$
    for $|b| < \ell$.
  \item \textbf{Pass.}
    $(\textsc{Consume}, n, b)
    \xrightarrow{\mathsf{skip}(c)}
    (\textsc{Consume}, n, b)$.
\end{enumerate}
\end{ndgroup}

\smallskip\noindent\textbf{Replay Phase:}\vspace{-0.75\baselineskip}
\begin{ndgroup}
\begin{enumerate}[nosep,leftmargin=1.5em]
  \setcounter{enumi}{4}
  \item \textbf{Replay.}
    $(\textsc{Replay}, n, b)
    \xrightarrow{c\,!\,m}
    (\textsc{Replay}, n{-}1, b \setminus m)$
    for $m \in b$.
  \item \textbf{Replay-No-Delete.}
    $(\textsc{Replay}, n, b)
    \xrightarrow{c\,!\,m}
    (\textsc{Replay}, n, b)$
    for $m \in b$.
  \item \textbf{Discard.}
    $(\textsc{Replay}, n, b)
    \xrightarrow{\tau}
    (\textsc{Replay}, n, b \setminus m)$
    for $m \in b$.
  \item \textbf{Pass.}
    $(\textsc{Replay}, n, b)
    \xrightarrow{\mathsf{skip}(c)}
    (\textsc{Replay}, n, b)$.
  \item \textbf{Terminate.}
    $(\textsc{Replay}, n, b)
    \xrightarrow{\tau}
    \textsc{End}$.
  \item \textbf{Empty-$\varepsilon$.}
    $(\textsc{Replay}, n, \varepsilon)
    \xrightarrow{\tau}
    \textsc{End}$
  \item \textbf{Empty-$\ell$.}
    $(\textsc{Replay}, 0, b)
    \xrightarrow{\tau}
    \textsc{End}$.
\end{enumerate}
\end{ndgroup}

\smallskip\noindent\textbf{Break Phase:}
\begin{enumerate}[nosep,leftmargin=1.5em]
  \setcounter{enumi}{11}
  \item \textbf{End.}
    $\textsc{End}
    \xrightarrow{\tau}
    \textsc{End}$.
\end{enumerate}
\end{definition}

Transitions~1--2 in the \textbf{Consume Phase} 
use the non-destructive read
$c\,?\,\langle m \rangle$: messages are copied without removal
from the channel. When $n = 1$, only \textbf{Observe-then-replay}
applies, forcing the gadget into the \textbf{Replay Phase}. When $n > 1$,
the gadget nondeterministically chooses to continue observing
or begin replaying. 
Once in the replay phase, the gadget nondeterministically chooses to \textbf{Replay} a message 
from memory then removing it, \textbf{Replay-No-Delete} a random message from memory but without deleting it, 
discarding a message from memory, or moving to the \textbf{Break Phase} nondeterministically or once
the fault limit is reached or the memory is depleted.

\textbf{Pass} prevents the gadget from trivially violating 
liveness by starving the recipient; \textbf{Terminate} allows for premature termination in any state.

\subsection{Reorder Gadget}
\label{sub:reorder-gadget}

The reorder gadget intercepts messages from $c$ via
destructive reads and replays them in a potentially
different order.

\begin{definition}[Reorder Gadget]
\label{def:reorder}
The reorder gadget $O_\ell(c)$ is a process
$\langle AP, I, O, S, s_0, T, L \rangle$ where:
\begin{itemize}[nosep,leftmargin=1.5em]
  \item $AP = \{\mathit{done}\}$
  \item $I = \mathsf{Recv}(c) \cup \{\mathsf{skip}(c)\}$
  \item $O = \mathsf{Send}(c)$
  \item $S = \{\textsc{Init}\}
         \;\cup\;
              (\{\textsc{Consume}\} \times \{0, \ldots, \ell\}
              \times \mathsf{Buf}_\ell(M))
         \;\cup\;
              (\{\textsc{Replay}\}
              \times \mathsf{Buf}_\ell(M))
         \;\cup\; \{\textsc{End}\}$
  \item $s_0 = \textsc{Init}$
  \item $L(s) = \{\mathit{done}\}$ iff $s = \textsc{End}$,
        $\emptyset$ otherwise
\end{itemize}

$T$ consists of the following transitions for all $m \in M$:

\smallskip\noindent\textbf{Init Phase:}\vspace{-1.0\baselineskip}
\begin{ndgroup}
\begin{enumerate}[nosep,leftmargin=1.5em]
  \item \textbf{Pass.}
    $\textsc{Init}
    \xrightarrow{\mathsf{skip}(c)}
    \textsc{Init}$.
  \item \textbf{Begin.}
    $\textsc{Init}
    \xrightarrow{\tau}
    (\textsc{Consume}, \ell, \varepsilon)$.
\end{enumerate}
\end{ndgroup}

\smallskip\noindent\textbf{Consume Phase:}
\begin{enumerate}[nosep,leftmargin=1.5em]
  \setcounter{enumi}{2}
  \item \textbf{Consume.}
    $(\textsc{Consume}, n, b)
    \xrightarrow{c\,?\,m}
    (\textsc{Consume}, n{-}1, b \cdot m)$
    for $n > 1$, $|b| < \ell$.
  \item \textbf{Consume-Last.}
    $(\textsc{Consume}, 1, b)
    \xrightarrow{c\,?\,m}
    (\textsc{Replay}, b \cdot m)$
    for $|b| < \ell$.
\end{enumerate}

\smallskip\noindent\textbf{Replay Phase:}
\begin{enumerate}[nosep,leftmargin=1.5em]
  \setcounter{enumi}{4}
  \item \textbf{Replay.}
    $(\textsc{Replay}, b)
    \xrightarrow{c\,!\,m}
    (\textsc{Replay}, b \setminus m)$
    for $m \in b$.
  \item \textbf{Empty.}
    $(\textsc{Replay}, \varepsilon)
    \xrightarrow{\tau}
    \textsc{End}$.
\end{enumerate}

\smallskip\noindent\textbf{Break Phase:}
\begin{enumerate}[nosep,leftmargin=1.5em]
  \setcounter{enumi}{6}
  \item \textbf{End.}
    $\textsc{End}
    \xrightarrow{\tau}
    \textsc{End}$.
\end{enumerate}
\end{definition}

The \textbf{Init Phase} nondeterministically skips an
arbitrary prefix of channel traffic via \textbf{Pass} before
entering the \textbf{Consume Phase}. Unlike the replay gadget,
consumption uses the destructive read $c\,?\,m$ and is
strictly sequential: the gadget consumes exactly $\ell$
messages before transitioning to replay. \textbf{Consume}
applies while the remaining budget is $n > 1$;
\textbf{Consume-Last} fires at $n = 1$, forcing the gadget
into the \textbf{Replay Phase}.

Once in the \textbf{Replay Phase}, \textbf{Replay}
nondeterministically selects $m \in b$ and sends it on $c$,
removing it from the buffer. This ensures all $\ell!$ orderings
are explored by the model checker. The gadget terminates via
\textbf{Empty} once the buffer is depleted, guaranteeing that
every intercepted message is delivered.

The reorder gadget executes at the highest priority in the
composite system, ensuring it intercepts messages on $c$
before other processes consume them.

\subsection{Properties of the Gadget Constructions}
\label{sub:gadget-properties}

We state four properties of the gadget constructions that
are essential for the correctness of \korg.

\begin{theorem}[Finiteness]
\label{prop:finite}
For any finite message domain $M$ and fault limit $\ell$,
the state sets of $D_\ell(c)$, $R_\ell(c)$, and $O_\ell(c)$
are finite.
\end{theorem}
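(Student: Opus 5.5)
The proof is a direct cardinality count on the state sets given in Definitions~\ref{def:drop}, \ref{def:replay}, and \ref{def:reorder}. Each state set is a finite union of products of finite sets, and finite sets are closed under finite products and finite unions, so finiteness follows once each factor is shown to be finite. The only factor that is not obviously finite is $\mathsf{Buf}_\ell(M)$, which the paper uses without defining. I will first fix its meaning as the set of sequences over $M$ of length at most $\ell$, i.e.\ $M^{\leq \ell}$, matching the channel-buffer notation of Definition~\ref{def:channel}. I will then justify this reading from the transitions: every transition that appends a message to $b$ (\textbf{Observe}, \textbf{Observe-then-replay}, \textbf{Consume}, \textbf{Consume-Last}) carries the guard $|b| < \ell$. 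Every other transition either leaves $b$ unchanged or shortens it via $b \setminus m$. Hence every reachable buffer has length at most $\ell$, and even if $\mathsf{Buf}_\ell(M)$ were read more liberally, the reachable state space is still contained in the finite set below.

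With this reading, the key bound is
\[
|\mathsf{Buf}_\ell(M)| \;=\; \sum_{j=0}^{\ell} |M|^j \;\leq\; (\ell+1)\,\max(1,|M|)^{\ell},
\]
which is finite because $M$ is finite and $\ell \in \mathbb{N}$.

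The three gadgets are then handled in turn.
\begin{itemize}
  \item For $D_\ell(c)$, $|S| = (\ell+1) + 1$.
  \item For $R_\ell(c)$, $|S| = 2(\ell+1)\,|\mathsf{Buf}_\ell(M)| + 1$.
  \item For $O_\ell(c)$, $|S| = 1 + (\ell+1)\,|\mathsf{Buf}_\ell(M)| + |\mathsf{Buf}_\ell(M)| + 1$.
\end{itemize}
Each expression is a finite sum of products of finite numbers. The phase tags \textsc{Main}, \textsc{Consume}, \textsc{Replay}, \textsc{Init}, and \textsc{End} are single elements, and each counter ranges over $\{0,\ldots,\ell\}$.

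As a consistency check, every transition target must lie in the declared state set. Counters only decrease from $\ell$ and are guarded ($n>0$, $n>1$, $n=1$), so they stay in $\{0,\ldots,\ell\}$. Transitions that reset the counter to $\ell$ also stay in range. Buffers stay in $M^{\leq \ell}$ by the length argument above.

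The only real obstacle is the undefined notation $\mathsf{Buf}_\ell(M)$, together with $b\setminus m$, which must be read as removing one occurrence of $m$ so that the length decreases by one. Once these are fixed as above, the rest is routine counting.
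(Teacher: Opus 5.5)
Your proposal is correct and takes the same route as the paper. Both read $\mathsf{Buf}_\ell(M)$ as $M^{\leq \ell}$, with $\sum_{i=0}^{\ell}|M|^i$ elements, and conclude that each state set is built from finitely many finite components; your explicit cardinalities make the paper's one-line count precise. One aside: your transition-closure check overstates matters slightly, because the \textbf{Replay} rule of $R_\ell(c)$ has no guard on $n$ (from $(\textsc{Replay},0,b)$ with $b \neq \varepsilon$ it would leave $S$, though such states are unreachable), but that check is not needed for finiteness of the declared state sets anyway.
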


\begin{proof}
The state set of each gadget is a product of a finite set of
control locations, a bounded counter in $\{0, \ldots, \ell\}$,
and a buffer drawn from $\mathsf{Buf}_\ell(M)$, which has
$\sum_{i=0}^{\ell} |M|^i$ elements. All components are finite.
\end{proof}

Finiteness is what enables a model checker to perform an exhaustive
state-space search over the composition of the protocol model
with the gadgets. Increasing $\ell$ enlarges the state space
factorially; in practice, small values of $\ell$ (typically 2--3)
are often sufficient to discover attacks while keeping
verification tractable.

\begin{theorem}[Soundness and Completeness]
\label{prop:sound-complete}
\korg provides sound and complete channel fault analysis.
\end{theorem}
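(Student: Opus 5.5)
We first fix the meaning of the statement, since ``sound and complete'' is stated informally. Relative to a fault class $\mathcal{F} \in \{\text{drop}, \text{replay}, \text{reorder}\}$, a fault limit $\ell$, and a set of victim channels, we define an \emph{$\ell$-bounded $\mathcal{F}$-attack} as an infinite execution of $P$ composed with the victim channels. In such an execution the channel contents are altered only by at most $\ell$ operations of the given kind: removals of messages, insertions of copies of previously observed messages, or a permutation of a consumed block of $\ell$ messages. An attack is one whose trace violates $\phi$. With this definition, \emph{soundness} means that every counterexample returned by \korg, i.e.\ every accepting run of the product in which the victim channels are replaced by $\text{Gadget}(Q_j)$, projects to an $\ell$-bounded $\mathcal{F}$-attack. \emph{Completeness} means that whenever an $\ell$-bounded $\mathcal{F}$-attack exists, \korg reports a counterexample.

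The proof reduces both directions to a trace-correspondence lemma for each gadget. We then combine that lemma with the correctness of LTL model checking, using Theorem~\ref{thm:proc-ba}, Theorem~\ref{thm:easp-pspace} and the Vardi--Wolper construction, together with Theorem~\ref{prop:finite}, which guarantees termination.

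\textbf{Step 1 (gadget $\subseteq$ faults).} Induct on the length of a run of $P \parallel \text{Gadget}(Q_j)$. We show that the sequence of channel operations performed by the gadget is an admissible fault sequence, using the invariants below.
\begin{itemize}
\item In $D_\ell(c)$, each Drop decrements the counter $n$, so at most $\ell$ drops occur.
\item In $R_\ell(c)$, the buffer $b$ contains only messages obtained via $c\,?\,\langle m\rangle$, and the counter bounds the number of replays.
\item In $O_\ell(c)$, the multiset of messages sent in the Replay phase equals the multiset consumed in the Consume phase. This follows from the Replay transition $b \mapsto b\setminus m$ together with the rule that Empty fires only at $\varepsilon$.
\end{itemize}
Skip and $\tau$ moves do not change the channel. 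Soundness then follows from the soundness of the model checker: an accepting run of the product with the never-claim for $\neg\phi$ yields a trace violating $\phi$.

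\textbf{Step 2 (faults $\subseteq$ gadget).} Given an attack execution, we construct a gadget run that simulates it, step by step. Internal protocol moves are mirrored using the gadget's Pass ($\mathsf{skip}$) transitions. Each fault is mirrored by the corresponding Drop, Observe/Replay, or Consume/Replay transition. For reorder, we pick the Replay choices matching the target permutation. Once the fault budget is exhausted, we move to \textsc{End} and let the protocol continue via the End self-loop. Interleaving is permitted by the Left/Right move rules of Definition~\ref{def:composition}. The resulting infinite run of the composition has the same projection onto $AP_P$, so it violates $\phi$. Because \spin explores all runs of the finite product, it finds an accepting cycle, which gives completeness.

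The main obstacle is Step 2 for the replay and reorder gadgets. These have restrictive guards: Observe requires $n>1$, Consume-Last forces exactly $\ell$ consumptions, and the reorder gadget has priority. These guards might forbid simulating some admissible fault patterns, for example a reorder of fewer than $\ell$ messages, or replays interleaved with later observations. My first approach is to show that each such pattern is equivalent, up to the protocol-visible trace, to one in the gadget's normal form. For instance, a reorder of fewer than $\ell$ messages can be realized using the identity permutation on the remaining messages of the block. If such a normalization fails, I would weaken the theorem so that completeness is stated relative to the fault language generated by the gadget, and the claim then follows directly from Step 1 together with the exhaustiveness of the model checker.
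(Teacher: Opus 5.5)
There is a genuine gap. Step~1 is false for two of the three gadgets, so the semantic soundness you aim for does not hold. Your fallback still routes soundness through Step~1, so it inherits the defect.

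The paper claims much less than you do. By Theorem~\ref{prop:finite} the instrumented product $P \parallel \text{Gadget}(Q_j)$ is finite, so \spin terminates. By Theorem~\ref{thm:easp-pspace} the check is an exact B\"uchi inclusion test. So every reported counterexample is a genuine accepting run of the \emph{instrumented} product (soundness), and exhaustive search finds one whenever one exists (completeness). No correspondence with an independent ``$\ell$-bounded fault'' semantics is asserted. Your main line tries to prove such a correspondence, and it fails as follows:
\begin{itemize}
\item In $R_\ell(c)$, \textbf{Replay-No-Delete} emits $c\,!\,m$ while leaving both $n$ and $b$ unchanged. A single run can therefore inject unboundedly many copies; this is the ``flooding'' the TCP case study reports. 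The quantities $n$ and $|b|$ bound only the number of observations and of deleting \textbf{Replay} moves, not the number of insertions.
\item In $O_\ell(c)$, the Consume phase has no \textbf{Pass} or \textbf{Terminate}. A run can consume $1 \le j < \ell$ messages and then block forever in $(\textsc{Consume}, \ell-j, b)$, for instance because the sender waits for a reply to what was consumed, while the rest of the system keeps running (e.g.\ via timeouts). Those messages are dropped, not permuted. The rule ``\textbf{Empty} fires only at $\varepsilon$'' constrains only runs that reach \textsc{End}, and an infinite (liveness) counterexample need not reach it.
\end{itemize}
Hence counterexamples produced with $R_\ell$ or $O_\ell$ need not project to $\ell$-bounded attacks in your sense; only the $D_\ell(c)$ case of Step~1 survives.

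The same scenarios defeat the normalization you propose for Step~2. Padding a short reorder to a full block of $\ell$ with the identity permutation fails when the padding messages are sent only after the first ones are delivered: the gadget then blocks as above instead of simulating the attack. Likewise, the replay gadget cannot realize an observation that must follow a replay, for example of a message the sender emits only in response to the replayed one, because the Consume phase is never re-entered.

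There are two repairs. One is to state \emph{both} directions relative to the gadget-instrumented model, as the paper does; this needs no trace-correspondence lemma at all. The other is to define each fault class as exactly the channel behaviour its gadget generates: at most $\ell$ observations followed by arbitrarily many reinsertions for replay, and a consumed block that may be lost rather than permuted for reorder. At that point Steps~1 and~2 become tautological.
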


\begin{proof}
By Proposition~\ref{prop:finite}, composing the protocol with any
gadget yields a finite-state system, so \spin's exploration terminates.
By Theorem~\ref{thm:easp-pspace}, verification reduces to B\"uchi
Automata language inclusion, which \spin decides exactly.
Completeness follows from exhaustive state-space search over the
finite product; soundness follows because any counterexample is a
genuine accepting run.
\end{proof}

\section{\korg Architecture}
\label{sec:design}

In this section we discuss the design and implementation
of \korg.

\subsection{High-Level Design}
\label{sub:high-level}

\korg is designed to synthesize attacks with respect to
imperfect channels. That is, \korg synthesizes attacks that
involve replaying, dropping, or reordering messages on one
or more communication channels relied upon by the victim protocol.

\korg targets user-specified communication channels in
state machine-based formal models of distributed protocols.
To use \korg, the user provides: (1) a formal model of a
distributed protocol in the \promela language, (2) the
communication channel(s) to attack, (3) the desired attacker
model (drop, replay, reorder), and (4) a
formalized correctness property in LTL. The protocol model
should satisfy the correctness property in the absence of any
attacker.

Once invoked, \korg parses the \promela model, identifies
the target channels and their message types, automatically
synthesizes the appropriate attacker gadgets, and injects
them into the model. The modified model is then passed to
\spin, which performs an exhaustive state-space search.
If an attack exists, \spin returns a counterexample trace;
otherwise, \spin certifies the absence of an attack via
complete state-space exploration.

\subsection{The \spin Model Checker}
\label{sub:spin}

We implement \korg on top of \spin, a model checker for
reasoning about distributed and concurrent systems \cite{tools/spin}. 
Models in \spin are written in \promela, a modeling language
for communicating state machines. \promela processes
communicate via built-in \textit{channel} structures, which can be either
synchronous (buffer size 0) or asynchronous (buffer size $> 0$).
\korg generates attacker gadgets \textit{with respect to}
these channels.

Internally, \spin reduces verification problems to deciding
B\"uchi Automata intersection emptiness via the
LTL--BA correspondence (Theorem~\ref{thm:ltl-ba}).
Given $n$ B\"uchi Automata $P_1, \ldots, P_n$ representing
the protocol processes and the negation of the LTL property,
\spin checks:
\[
  \mathcal{L}(P_1) \cap \mathcal{L}(P_2) \cap \cdots
  \cap \mathcal{L}(P_n) \stackrel{?}{=} \emptyset
\]
Rather than constructing the full product automaton
(which is $|P_1| \times \cdots \times |P_n|$ in size),
\spin employs \textit{on-the-fly composition}, exploring
the product state space without explicitly materializing it.

\korg could alternatively be built on top of other automated
reasoning tools, including TLA+ \cite{blog/wayne18} and Dafny \cite{journal/ifm/dione}. We choose \spin 
due to its built-in support for channels compatible 
with the interface presented in Definition \ref{def:channel}, as well as
its historical popularity and robustness.

A high-level visual overview of the \korg pipeline in relation to \spin is given in Figure \ref{fig:korg_workflow}.

\vspace{-10pt}
\begin{figure}[h]
    \centering
    \includegraphics[width=0.65\textwidth]{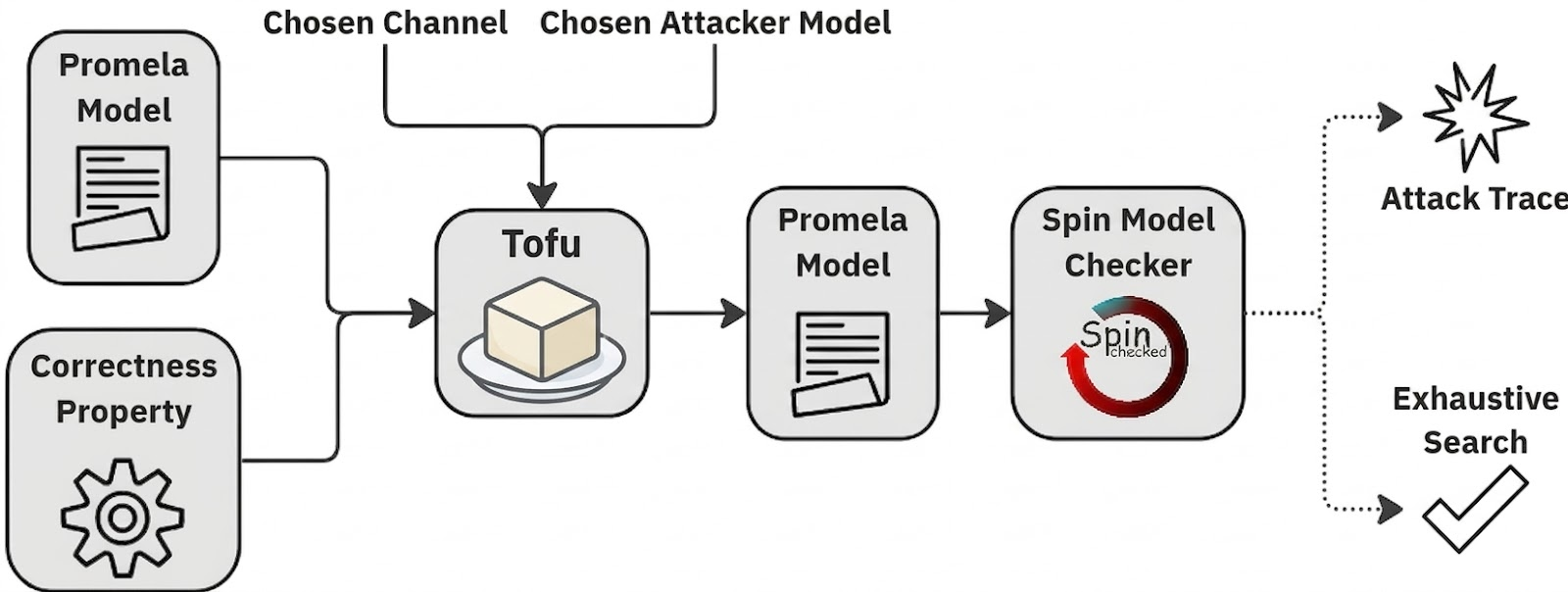}
    \caption{A high-level overview of the \korg workflow}
    \label{fig:korg_workflow}
\end{figure}
\vspace{-10pt}

We now describe how the formal gadgets of
Section~\ref{sec:gadgets} are realized within \promela.

\textbf{Channel parsing.}
\korg begins by parsing the user-provided \promela model to
extract all channel declarations, including their names,
buffer sizes, and message types. \korg supports both standard
channels (e.g., \texttt{chan c = [8] of \{int, int, int\}})
and indexed multi-channels (e.g., \texttt{chan c[N] = [8] of \{int\}}).
The parsed channel types determine the shape of the
variables and buffers used in the synthesized gadgets.

\textbf{Drop gadget implementation.}
The drop gadget is synthesized as a \promela \texttt{proctype}
that monitors the target channel using the polling operator
\texttt{?[...]}. When a message is detected and the drop counter
has not been exhausted, the gadget atomically consumes the message
via \texttt{?} and decrements the counter. The $\mathsf{skip}(c)$
action is implemented by tracking \texttt{len(c)} and blocking
until the channel length changes, as described in
Section~\ref{sub:gadget-properties}.

\textbf{Replay gadget implementation.}
The replay gadget is synthesized with a local \promela channel
\texttt{gadget\_mem} serving as the internal FIFO buffer
$\mathsf{Buf}_\ell(M)$. In the $\textsc{Consume}$ phase,
messages are copied from the target channel using the
non-destructive receive operator \texttt{?<...>}
(corresponding to $c\,?\,\langle m \rangle$) and stored into
\texttt{gadget\_mem}. In the $\textsc{Replay}$ phase,
messages are nondeterministically selected from
\texttt{gadget\_mem} using a rotation idiom:
\spin's \texttt{select} statement generates a random index,
and messages are rotated through the FIFO until the selected
message is at the head, at which point it is dequeued and sent
onto the target channel. Messages may also be silently
discarded from the buffer to model partial replay.

\textbf{Reorder gadget implementation.}
The reorder gadget is the most complex. It is synthesized with
\promela's \texttt{priority} mechanism set to the maximum value
(priority 99 in the implementation),
ensuring the gadget always takes precedence over other processes
when consuming messages from the target channel.
This is essential: without priority, other processes may
consume messages from $c$ before the reorder gadget intercepts
them, breaking the reordering semantics. The $\textsc{Init}$
phase uses the $\mathsf{skip}(c)$ implementation to
nondeterministically skip an arbitrary prefix of messages.
The $\textsc{Consume}$ phase uses the destructive receive
\texttt{?} to remove messages from the channel into the local
buffer. The $\textsc{Replay}$ phase uses the same
rotation-based selection as the replay gadget.

\textbf{Gadget injection.}
Once generated, the gadget \promela code is injected into the
user-provided model by appending the generated \texttt{proctype}
declarations. The modified model is then passed to \spin for
verification. Multiple gadgets targeting different channels
can be composed in a single model.

\section{Case Study: TCP}
\label{sec:case-study-tcp}

The Transmission Control Protocol (TCP) is a transport-layer
protocol designed to establish reliable, ordered communication
between two peers. TCP is ubiquitous in today's internet
and has therefore seen ample formal verification efforts,
including using \promela and \spin \cite{conf/secdev/guillaume21,conf/snp/leonor22}.

We constructed a \promela model of TCP referencing the
set of TCP RFCs \cite{rfc793,rfc9293}. Our model follows the state-level control
flow for connection establishment and teardown, the base-level
packet transmission structure, and timeouts.
The model consists of two symmetric \texttt{TCP} processes
communicating over two asynchronous channels,
\texttt{AtoB} and \texttt{BtoA}, each with buffer capacity~2.
Messages are drawn from the domain
$\{\texttt{SYN}, \texttt{FIN}, \texttt{ACK}\}$.

We derive six LTL properties from the RFC specifying
the correctness of our TCP model, and verify that
our model satisfies all six under the assumption of
perfect, fault-free channels:
\begin{itemize}[nosep]
  \item $\phi_1$: No half-open connections (safety).
  \item $\phi_2$: No infinite stalls or deadlocks (liveness).
  \item $\phi_3$: \texttt{SYN\_RECEIVED} eventually resolves to
        \texttt{ESTABLISHED}, \texttt{FIN\_WAIT\_1}, or
        \texttt{CLOSED} (liveness).
  \item $\phi_4$: Simultaneous close eventually resolves (liveness).
  \item $\phi_5$: Active close eventually terminates (liveness).
  \item $\phi_6$: The three-way handshake cannot be bypassed (safety).
\end{itemize}

Crucially, we purposely do not model implementation-level
resilience features such as retransmission timeouts,
duplicate ACK detection, sequence numbers, acknowledgement
numbers, or receive buffers --- features designed precisely
to mitigate channel faults in practice.
We thoroughly assess our limited TCP model against \korg's
drop, replay, and reorder attackers to evaluate their
capabilities in the scenario where the TCP implementation
lacks these resilience features. We evaluated the TCP model against all three attacker models
on individual and combined channels.
The drop and replay limits were set to $\ell = 1$, and the
reorder limit was set to $\ell = 2$.
Our results are summarized in Table~\ref{tab:tcp-results}.

\begin{table}[h]
\vspace{-1.5em}
\centering
\caption{Results of \korg analysis on TCP for $\phi_1$--$\phi_6$.
\checkmark{} indicates \korg proved the absence of an attack
via exhaustive search;
$\times$ indicates an attack was discovered.}
\label{tab:tcp-results}
\smallskip
\begin{tabular}{@{} l ccc @{}}
\toprule
\textbf{Property} & \textbf{Drop} & \textbf{Replay} & \textbf{Reorder} \\
\midrule
$\phi_1$ (half-open prevention)       & $\times$   & $\times$   & \checkmark \\
$\phi_2$ (no deadlocks)               & $\times$   & $\times$   & \checkmark \\
$\phi_3$ (\textsc{SynRec} resolution) & $\times$   & $\times$   & \checkmark \\
$\phi_4$ (simultaneous close)         & $\times$   & $\times$   & \checkmark \\
$\phi_5$ (active close terminates)    & $\times$   & $\times$   & \checkmark \\
$\phi_6$ (handshake bypass)           & \checkmark & \checkmark & \checkmark \\
\bottomrule
\end{tabular}
\vspace{-1.5em}
\end{table}

The drop and replay attackers were able to violate five of the
six properties. For instance, \korg discovered that dropping
A's \texttt{FIN} allows A to eventually time out to
\texttt{Closed} while B remains stranded in \texttt{Established},
violating $\phi_1$. Similarly, replaying a stale \texttt{SYN}
forces the receiver into a spurious handshake state,
also violating $\phi_1$. For the liveness properties
$\phi_2$, $\phi_3$, $\phi_4$, and $\phi_5$,
\korg discovered acceptance cycles in which the attacker
prevents progress by consuming or flooding critical
control messages. The reorder attacker, by contrast, was unable to violate
any property: out-of-order packets are simply skipped by
the TCP state machine, causing endpoints to resolve
to \texttt{Closed} via timeouts rather than entering
an erroneous state.

These results demonstrate the need for the implementation-level
resilience features we intentionally omitted.
In a full TCP implementation, retransmission timeouts and
sequence number checks would mitigate the drop and replay
attacks \korg discovered. \korg's analysis mechanically
confirms this intuition: the base-level state machine alone
is insufficient, and the protocol's correctness depends on
these additional mechanisms.

\section{Related Work}
\label{sec:related}

\textbf{Formal Analysis of Secure Protocols.}
Several formal methods tools reason about protocol security,
primarily in the cryptographic setting, including
Tamarin \cite{tools/tamarin}, ProVerif \cite{tools/proverif},
and CryptoVerif \cite{tools/cryptoverif}.
Model checker-based approaches using \spin \cite{tools/spin} or TLA+ \cite{tools/tla} have
been applied to distributed protocols but have historically
focused on functional correctness rather than channel-level faults.
There is also a long history of using formal methods ad-hoc
to reason about on-channel attackers, particularly in the
context of Byzantine protocols, as well as message tampering,
delays, and congestion control \cite{book/lynch96}.
The automated attacker synthesis framework as presented by von Hippel et al. \cite{conf/safecomp/Hippel20}
was also extended to support stochastic systems by Oakley et al. \cite{conf/csf/oakley22}; however, this extension was designed primarily to target statistical models rather than insecure communication channels as \korg does.

\textbf{Network Testing Tools.}
On the implementation side, Scapy \cite{tools/scapy} provides
programmatic packet crafting and injection,
\texttt{netem} \cite{tools/netem} provides kernel-level fault
injection (loss, delay, reordering), and
packetdrill \cite{tools/packetdrill} enables scripted network
stack testing.
These tools target implementations rather than models and
provide no completeness guarantees.
\korg complements them: it operates on formal models with
sound, complete analysis, while implementation-level tools
remain invaluable for confirming that discovered attacks
manifest in practice.

\section{Conclusion}
\label{sec:conclusion}

We have presented \korg, a tool for the automated, sound,
and complete synthesis of channel fault attacks on distributed
protocols. \korg fills a gap in the formal verification
landscape by providing systematic support for reasoning about
drop, replay, and reorder attackers on communication channels,
operations that were previously modeled only in an ad-hoc
fashion. By leveraging \spin, \korg either synthesizes concrete
attack traces or certifies the absence of attacks via exhaustive
state-space search. We have demonstrated \korg's applicability 
through a case study on TCP, and we release \korg and all models as 
open source.

\bibliographystyle{plain}
\bibliography{main}

\section{Appendix}
\label{sec:appendix}

\subsection{Linear Temporal Logic-B\"uchi Automata Correspondence}
\label{sec:LTL-BA-SEC}
Linear Temporal Logic is traditionally defined under the following semantics:

\begin{definition}[LTL Semantics]
\label{def:ltl-sem}
An LTL formula $\phi$ is interpreted over an infinite word
$\sigma = \sigma_0 \sigma_1 \sigma_2 \cdots \in (2^{AP})^\omega$.
The satisfaction relation $\sigma, i \models \phi$ (``$\phi$
holds at position $i$ in $\sigma$'') is defined inductively:
\begin{align*}
\sigma, i &\models \mathit{true} && \text{always} \\
\sigma, i &\models p && \text{iff } p \in \sigma_i \\
\sigma, i &\models \neg \phi && \text{iff } \sigma, i \not\models \phi \\
\sigma, i &\models \phi_1 \land \phi_2 && \text{iff }
    \sigma, i \models \phi_1 \text{ and } \sigma, i \models \phi_2 \\
\sigma, i &\models \mathbf{X}\,\phi && \text{iff }
    \sigma, i{+}1 \models \phi \\
\sigma, i &\models \phi_1\,\mathbf{U}\,\phi_2 && \text{iff }
    \exists\, j \geq i.\;
    \sigma, j \models \phi_2 \text{ and }
    \forall\, i \leq k < j.\; \sigma, k \models \phi_1
\end{align*}
We write $\sigma \models \phi$ for $\sigma, 0 \models \phi$.
The \textit{language} of $\phi$, denoted $\mathcal{L}(\phi)$,
is the set of all $\sigma \in (2^{AP})^\omega$ such that
$\sigma \models \phi$.
\end{definition}

The fundamental connection between LTL and B\"uchi Automata
is that they characterize the same class of $\omega$-regular
properties:

\begin{theorem}[LTL--BA Correspondence]
\label{thm:ltl-ba}
For every LTL formula $\phi$ over $AP$, there exists a B\"uchi
automaton $B_\phi$ over alphabet $\Sigma = 2^{AP}$ such that
$\mathcal{L}(B_\phi) = \mathcal{L}(\phi)$. The construction
is effective and produces a BA with at most $2^{|\phi|}$ states.
\end{theorem}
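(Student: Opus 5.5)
The plan is to prove the LTL--BA correspondence by the standard tableau (Vardi--Wolper) construction over the Fischer--Ladner closure of $\phi$, with a B\"uchi acceptance condition that enforces every until-obligation to be met.

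\textbf{Closure and atoms.} First normalise $\phi$ so that only the primitives of Definition~\ref{def:ltl-syntax} occur ($\top$, atomic propositions, $\neg$, $\land$, $\mathbf{X}$, $\mathbf{U}$). Define the closure $cl(\phi)$ as the set of subformulas of $\phi$ together with their negations, identifying $\neg\neg\psi$ with $\psi$. An \emph{atom} is a maximal consistent subset $A \subseteq cl(\phi)$. Concretely:
\begin{itemize}
\item $A$ contains exactly one of $\psi$ and $\neg\psi$ for each $\psi \in cl(\phi)$;
\item $\psi_1\land\psi_2 \in A$ iff both $\psi_1 \in A$ and $\psi_2 \in A$;
\item $\psi_1\mathbf{U}\psi_2 \in A$ implies $\psi_1 \in A$ or $\psi_2 \in A$, and $\psi_2 \in A$ implies $\psi_1\mathbf{U}\psi_2 \in A$.
\end{itemize}
Because an atom is determined by its choice on the positive subformulas, there are at most $2^{|\phi|}$ atoms. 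These atoms serve as the states $Q$.

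\textbf{Transitions and acceptance.} Put $(A,\sigma,A') \in \delta$ iff all of the following hold:
\begin{itemize}
\item $\sigma = A \cap AP$;
\item for every $\mathbf{X}\psi \in cl(\phi)$, $\mathbf{X}\psi \in A$ iff $\psi \in A'$;
\item for every $\psi_1\mathbf{U}\psi_2 \in cl(\phi)$, $\psi_1\mathbf{U}\psi_2 \in A$ iff either $\psi_2 \in A$, or both $\psi_1 \in A$ and $\psi_1\mathbf{U}\psi_2 \in A'$.
\end{itemize}
The initial states are $Q_0 = \{A : \phi \in A\}$. This first yields a \emph{generalized} B\"uchi automaton with one acceptance set $F_i = \{A : \psi_2^i \in A \text{ or } \psi_1^i\mathbf{U}\psi_2^i \notin A\}$ for each until-subformula. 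It is then degeneralised by the standard counter product, which multiplies the state count by the number of until-subformulas. To keep within the stated $2^{|\phi|}$ bound, I would either measure $|\phi|$ so that this polynomial factor is absorbed, or note that the bound is intended up to the usual $|\phi|\cdot 2^{|\phi|}$ accounting. This bookkeeping is a point where the stated constant may need to be read loosely.

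\textbf{Correctness, in two inclusions.}
\begin{enumerate}
\item $\mathcal{L}(\phi)\subseteq\mathcal{L}(B_\phi)$. Given $\sigma \models \phi$, take the run $A_i = \{\psi \in cl(\phi) : \sigma,i\models\psi\}$. By the semantics of Definition~\ref{def:ltl-sem}, each $A_i$ is an atom and consecutive atoms satisfy the transition rules; the $\mathbf{U}$ rule is exactly the expansion law $\psi_1\mathbf{U}\psi_2 \equiv \psi_2 \lor (\psi_1 \land \mathbf{X}(\psi_1\mathbf{U}\psi_2))$. Acceptance holds because every true until is eventually fulfilled.
\item $\mathcal{L}(B_\phi)\subseteq\mathcal{L}(\phi)$. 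Given an accepting run $A_0A_1\cdots$, prove by structural induction on $\psi$ that $\psi \in A_i$ iff $\sigma,i\models\psi$. The cases for $AP$, $\neg$, $\land$, and $\mathbf{X}$ follow directly from the atom and transition conditions.
\end{enumerate}

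\textbf{Main obstacle.} The difficult step is the $\mathbf{U}$ case of the second inclusion. Suppose $\psi_1\mathbf{U}\psi_2 \in A_i$. The transition rule only propagates the obligation forward; it does not by itself force fulfilment. So we use acceptance: the set $F$ for this until is visited infinitely often. Let $j \ge i$ be the first position in that set. Up to $j$ the until stays in the atoms, so at $j$ we must have $\psi_2 \in A_j$, while $\psi_1 \in A_k$ for all $i \le k < j$. The induction hypothesis then gives the semantic clause. For the converse direction, if $\psi_1\mathbf{U}\psi_2 \notin A_i$, the atom's maximality gives $\neg(\psi_1\mathbf{U}\psi_2) \in A_i$, and the transition rule propagates $\neg\psi_2 \land (\neg\psi_1 \lor \text{next}\;\neg(\psi_1\mathbf{U}\psi_2))$ forward, which refutes the semantic clause. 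Effectiveness is immediate, since $cl(\phi)$, the atoms, and $\delta$ are all finite and computable.
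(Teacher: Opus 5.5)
The paper gives no proof of this theorem; it states it in the appendix and attributes it to Vardi and Wolper. Your atom (tableau) construction is the standard argument behind that citation, and the core of your correctness proof is right. That includes the delicate $\mathbf{U}$ case: you take the first visit to the acceptance set for $\psi_1\mathbf{U}\psi_2$ at or after $i$ and use the iff-form of the transition rule to carry the obligation up to it. For the converse, you propagate $\neg(\psi_1\mathbf{U}\psi_2)$ until the atom condition $\psi_2\in A\Rightarrow\psi_1\mathbf{U}\psi_2\in A$ gives a contradiction at the witness position. However, two details of your construction as written make the claimed language equality false, and the stated size bound is not delivered.

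\textbf{The letter condition.} Requiring $\sigma = A\cap AP$ is too strong whenever $AP$ contains propositions not occurring in $\phi$. With $AP=\{p,q\}$ and $\phi=p$, the word $\{p,q\}^\omega$ satisfies $\phi$ but admits no run, so $\mathcal{L}(\phi)\subseteq\mathcal{L}(B_\phi)$ fails. The condition should be $\sigma\cap AP_\phi = A\cap AP_\phi$, where $AP_\phi$ is the set of propositions occurring in $\phi$.

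\textbf{The $\top$ condition.} Your concrete atom conditions omit $\top\in A$. ``Consistent'' presumably intends it, but the proof needs it, and your structural induction also skips the $\top$ case. Without it, take $\phi=\neg(\top\,\mathbf{U}\,p)$. The atom $\{\neg\top,\neg p,\neg(\top\,\mathbf{U}\,p)\}$ is initial and imposes no constraint on its successor. Hence $\emptyset\,\{p\}\,\emptyset^\omega$ is accepted, although it violates $\mathbf{G}\,\neg p$. Since the paper defines $\mathbf{F}$ and $\mathbf{G}$ through $\top\,\mathbf{U}$, this is not a corner case.

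\textbf{The size bound.} As you yourself note, you bound only the generalized automaton by $2^{|\phi|}$. Degeneralization gives up to $k\cdot 2^{|\phi|}$ states, where $k$ is the number of until-subformulas. You give no finer count that would bring a plain B\"uchi automaton under the literal bound, for instance one exploiting that Boolean subformulas and most until-bits are forced. The result the paper cites is usually stated as $2^{O(|\phi|)}$, so the looseness may lie in the statement itself. As it stands, though, your argument proves that weaker form, not the stated constant.
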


This correspondence, due to Vardi and Wolper \cite{conf/lics/Vardi86}, is the basis
of automata-theoretic model checking: to check whether
$P \models \phi$, one constructs $B_{\neg\phi}$ (the BA
accepting exactly the traces that violate $\phi$), computes
the product of the system automaton with $B_{\neg\phi}$,
and checks whether the resulting product has an accepting run.
An accepting run constitutes a counterexample; an empty
language constitutes a proof that $P \models \phi$.

\subsection{Partial Order Reduction}
\label{app:por}

\spin employs partial order reduction (POR) to mitigate
state-space explosion by pruning interleavings that lead
to equivalent states. However, POR can be unsound in the
presence of certain nondeterministic constructs. We compile
all \korg models with the \texttt{-DNOREDUCE} flag, disabling
POR to ensure the exhaustive search remains sound and complete.
While this increases verification time, it is necessary to
preserve the completeness guarantees of \korg.

\subsection{Expanded Proofs and Additional Theorems}
\label{sub:expanded-proofs}

\textbf{Theorem \ref{thm:proc-ba}.} \textit{Every process $P = \langle AP, I, O, S, s_0, T, L \rangle$ directly corresponds to a B\"uchi automaton, and vice versa.}

\begin{proof}
We demonstrate a constructive, bidirectional mapping between the process formalism and B\"uchi Automata (BA) that strictly preserves the language of infinite execution traces.

\textit{Forward Mapping (Process to BA):}
Let $P = \langle AP, I, O, S, s_0, T, L \rangle$ be a process, and let $p \in AP$ be a distinguished atomic proposition representing the acceptance condition. We construct a corresponding BA, $B = (Q, \Sigma, \delta, Q_0, F)$, as follows:
\begin{itemize}
    \item $Q = S$: The states of the automaton are identical to the process states.
    \item $Q_0 = \{s_0\}$: The initial state of the process becomes the singleton initial state set of the BA.
    \item $\Sigma = I \cup O \cup \{\tau\}$: The automaton alphabet is the disjoint union of the process's input, output, and internal actions.
    \item $\delta = T$: The transition relation maps directly. For every $(s, \alpha, s') \in T$, we define $(s, \alpha, s') \in \delta$.
    \item $F = \{s \in S \mid p \in L(s)\}$: The accepting states are exactly those where the proposition $p$ holds.
\end{itemize}
An infinite run in $P$ is a sequence of states $s_0, s_1, \ldots$ produced by transitions in $T$. By definition, this run satisfies the property $\mathbf{G}\mathbf{F} p$ if and only if it visits states where $p \in L(s)$ infinitely often. In the constructed $B$, this corresponds exactly to visiting states in $F$ infinitely often, which is the standard B\"uchi acceptance condition. Thus, the languages are structurally equivalent.

\textit{Backward Mapping (BA to Process):}
Let $B = (Q, \Sigma, \delta, Q_0, F)$ be a B\"uchi automaton. We construct a process $P = \langle AP, I, O, S, s_0^\prime, T^\prime, L \rangle$. Because our process definition requires a single initial state $s_0^\prime$ but a BA may have a set of initial states $Q_0$, we introduce a dedicated starting state if $|Q_0| > 1$.
\begin{itemize}
    \item $S = Q \cup \{s_0^\prime\}$, where $s_0^\prime$ is a fresh state.
    \item $I = \Sigma$ and $O = \emptyset$. All alphabet symbols are treated as input actions.
    \item $T^\prime = \delta \cup \{(s_0^\prime, \tau, q) \mid q \in Q_0\}$: We preserve the BA transitions and add silent $\tau$-transitions from the new initial state $s_0^\prime$ to all original initial states in $Q_0$.
    \item $AP = \{\mathit{accept}\}$.
    \item $L(s) = \{\mathit{accept}\}$ if $s \in F$, and $\emptyset$ otherwise.
\end{itemize}
Any accepting run in $B$ begins in some $q \in Q_0$ and visits $F$ infinitely often. The corresponding run in $P$ begins at $s_0^\prime$, takes a $\tau$-transition to $q$, and proceeds identically, visiting states labeled with $\mathit{accept}$ infinitely often. Therefore, the trace semantics and acceptance properties are preserved in both directions.
\end{proof}

\vspace{1em}

\begin{theorem}
No gadget $D_\ell(c)$, $R_\ell(c)$, or $O_\ell(c)$ admits an infinite run that cycles indefinitely between non-$\textsc{End}$ states while satisfying the labeling condition $\mathit{done}$.
\end{theorem}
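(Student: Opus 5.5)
The claim follows almost immediately from the labeling function, so the plan is to reduce it to the definition of $L$ and then make precise what ``cycles indefinitely between non-$\textsc{End}$ states'' means. In each of Definitions~\ref{def:drop}, \ref{def:replay} and \ref{def:reorder}, $L(s) = \{\mathit{done}\}$ if and only if $s = \textsc{End}$, and $L(s) = \emptyset$ otherwise. So any state $s \neq \textsc{End}$ has $\mathit{done} \notin L(s)$.

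I would read ``satisfying the labeling condition $\mathit{done}$'' in the strongest reasonable way, as $\mathbf{G}\mathbf{F}\,\mathit{done}$: $\mathit{done}$ holds infinitely often. This is also the acceptance reading used in the forward mapping of Theorem~\ref{thm:proc-ba}, with $p = \mathit{done}$. Weaker readings, such as $\mathit{done}$ holding at some position of the cycle, are covered by the same argument.

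First, fix any gadget $G \in \{D_\ell(c), R_\ell(c), O_\ell(c)\}$ and an infinite run $\rho = s_0 s_1 s_2 \cdots$ that from some index $N$ onward visits only non-$\textsc{End}$ states. Then $\mathit{done} \notin L(s_i)$ for all $i \ge N$. Hence $\mathit{done}$ holds at only finitely many positions, and $\rho \not\models \mathbf{G}\mathbf{F}\,\mathit{done}$. Equivalently, under the Theorem~\ref{thm:proc-ba} translation, $\rho$ is not Büchi-accepting, because $F = \{\textsc{End}\}$ is visited only finitely often.

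Second, for completeness I would note that $\textsc{End}$ is absorbing in every gadget. Its only outgoing transition is the $\tau$ self-loop (\textbf{End}). So once a run reaches $\textsc{End}$, it never returns to a non-$\textsc{End}$ state. This rules out the other reading of ``cycling between non-$\textsc{End}$ states'', namely an infinite alternation that enters and leaves $\textsc{End}$. To confirm this, I would check that no transition in the three definitions has source $\textsc{End}$ other than the self-loop. The check is a direct inspection of the listed transition groups.

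The only real obstacle is interpretive: fixing what ``satisfying the labeling condition $\mathit{done}$'' formally means. Once it is pinned to $\mathbf{G}\mathbf{F}\,\mathit{done}$ (or any condition requiring $\mathit{done}$ along the cycle), the argument is a two-line consequence of $L$ together with the absorbing property of $\textsc{End}$. No use of the counters, buffers, or finiteness (Theorem~\ref{prop:finite}) is needed.
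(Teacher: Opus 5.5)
Your proposal is correct and follows essentially the same route as the paper: both rest on the fact that $L$ assigns $\mathit{done}$ only to $\textsc{End}$, so a run confined to non-$\textsc{End}$ states never meets the B\"uchi ($\mathbf{G}\mathbf{F}\,\mathit{done}$) condition, and both note that $\textsc{End}$ is absorbing. The paper also adds a discussion of counters and buffer sizes decreasing monotonically. As you correctly observe, that discussion is not needed for the statement as written, and your ``from some index $N$ onward'' formulation is slightly more general than the paper's ``confined strictly'' assumption.
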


\begin{proof}
To prove livelock freedom with respect to the acceptance condition, we must show that no gadget can satisfy the B\"uchi acceptance condition (visiting a state labeled with $\mathit{done}$ infinitely often) without permanently terminating in the $\textsc{End}$ state.

By Definitions \ref{def:drop}, \ref{def:replay}, and \ref{def:reorder}, the labeling function for all three gadgets ($D_\ell(c)$, $R_\ell(c)$, and $O_\ell(c)$) is uniformly defined as:
$$L(s) = \{\mathit{done}\} \iff s = \textsc{End}$$
and $L(s) = \emptyset$ for all $s \neq \textsc{End}$.

Assume, for the sake of contradiction, that there exists an infinite run $\rho = s_0, s_1, s_2, \ldots$ within one of the gadgets that cycles indefinitely exclusively among non-$\textsc{End}$ states, and that this run visits a state labeled with $\mathit{done}$ infinitely often.

Because $\rho$ is confined strictly to non-$\textsc{End}$ states, for all $s_i \in \rho$, $s_i \neq \textsc{End}$. By the definition of $L(s)$, if $s_i \neq \textsc{End}$, then $\mathit{done} \notin L(s_i)$. Therefore, $\mathit{done}$ is never satisfied at any point in the infinite run $\rho$. This directly contradicts the assumption that the run visits a state labeled with $\mathit{done}$ infinitely often.

Furthermore, we examine the internal structure of the gadgets to ensure infinite internal cycling (standard livelock) cannot spuriously mimic progress:
\begin{enumerate}
    \item State-mutating transitions strictly and monotonically decrease finite resources: the consume counter $n$ is decremented upon observing/dropping, and the buffer size $|b|$ is decremented upon replaying. Because $n \leq \ell$ and $|b| \leq \ell$ are bounded, infinite mutation without eventually reaching $n=0$ or $|b|=0$ is impossible.
    \item The non-mutating transitions (\textbf{Pass} $\mathsf{skip}(c)$ self-loops) maintain the current state. An infinite sequence of \textbf{Pass} transitions remains in a non-$\textsc{End}$ state forever, thus never satisfying the $\mathit{done}$ label.
    \item The \textbf{Terminate} or \textbf{Empty} transitions lead strictly to $\textsc{End}$.
\end{enumerate}
Once a gadget transitions to $\textsc{End}$, the only available transition is the $\textsc{End} \xrightarrow{\tau} \textsc{End}$ self-loop. Therefore, to satisfy $\mathit{done}$ infinitely often, the gadget must reach the $\textsc{End}$ state, effectively exhausting its fault injection capabilities. No cyclic livelock satisfying the property exists.
\end{proof}

\begin{theorem}[Composability]
\label{prop:composability}
Multiple gadgets targeting distinct channels compose via
asynchronous parallel composition, preserving soundness
and completeness.
\end{theorem}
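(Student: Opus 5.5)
The plan is to reduce composability to the single-gadget results already established: Finiteness (Theorem~\ref{prop:finite}), Soundness and Completeness (Theorem~\ref{prop:sound-complete}), and the reduction to B\"uchi Automata language inclusion (Theorem~\ref{thm:easp-pspace}). Let $G_1, \ldots, G_j$ be gadgets targeting pairwise distinct victim channels $c_{i_1}, \ldots, c_{i_j}$. I will show three things: the composite $P \parallel G_1 \parallel \cdots \parallel G_j$ is well defined under Definition~\ref{def:composition}, it is finite-state, and its trace semantics is exactly what the analysis intends.

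\textbf{Well-definedness.} Definition~\ref{def:composition} requires pairwise disjoint output sets and disjoint atomic propositions.

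For outputs, each gadget's output set is either $\emptyset$ or $\mathsf{Send}(c_{i_r})$. Because the channels are distinct, the sets $\mathsf{Send}(c_{i_r})$ are pairwise disjoint as action sets.

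The one wrinkle is that a replay or reorder gadget shares $\mathsf{Send}(c)$ with the legitimate sender in $P$. I would handle this the way \korg realises channels: the channel buffer is a shared resource, so multiple writers are modelled by interleaving. Formally, I would rename each gadget's send actions to fresh labels that act on the same buffer state.

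For propositions, every gadget carries the label $\mathit{done}$. I would rename this to $\mathit{done}_r$ per gadget, which is harmless because $\phi$ does not mention $\mathit{done}$. Finally, I would note that the $\mathsf{skip}(c_{i_r})$ actions are pairwise distinct, and that inputs on distinct channels never synchronise across gadgets. Hence gadget--gadget interaction happens only through left and right moves, that is, pure interleaving.

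\textbf{Finiteness and reduction.} The state space is $S_P \times \prod_r S_{G_r}$ together with the channel buffers $M^{\le k}$. This is a finite product of sets that are finite by Theorem~\ref{prop:finite}.

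By Theorem~\ref{thm:proc-ba}, each component corresponds to a B\"uchi automaton, and the composition corresponds to their product. Checking $\phi$ therefore remains B\"uchi language inclusion, exactly as in the proof of Theorem~\ref{thm:easp-pspace}. The exhaustive search then terminates and decides inclusion exactly. Soundness follows because every accepting run of the product projects to legal runs of $P$ and of each $G_r$, so it is a genuine attack. Completeness follows because every joint attack run is an interleaving of component runs, and $T_\parallel$ contains all such interleavings.

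\textbf{Main obstacle.} The hardest step is making sure composition does not introduce or remove behaviours. Two mechanisms need care.

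The first is the priority of the reorder gadget. With two reorder gadgets on distinct channels, both at maximal priority, I would argue that priority only restricts choices on that gadget's own channel. Since the channels are distinct, the priorities never compete over the same enabled receive, so the restriction on each channel is the same as in isolation.

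The second is completeness with respect to the intended combined attacker. Here I would show a projection lemma: a run of the composite restricted to the actions of $G_r$ is a run of $G_r$ alone. I would prove this by induction on run length using the left- and right-move clauses.

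Together these give soundness and completeness of the composed analysis. They also give that the remark's inclusion $\mathrm{Gadget}(Q)\subseteq \mathrm{Daisy}(Q)$ lifts componentwise, so the complexity remains PSPACE.
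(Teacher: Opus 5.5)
Your proposal is correct and takes essentially the same route as the paper's proof. Gadgets on distinct channels have pairwise disjoint action sets, so they only interleave and add no new synchronization constraints; by Theorem~\ref{prop:finite} the product stays finite, so \spin can search it exhaustively. You are also more careful than the paper about the side conditions of Definition~\ref{def:composition}: those conditions actually fail without your renaming of $\mathit{done}$ and of the gadgets' $\mathsf{Send}(c)$ actions.

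The one inaccurate step is your priority aside. In \spin an executable higher-priority process preempts \emph{all} lower-priority processes, whatever channel they use, so priorities are not channel-local. However, priority is not part of Definition~\ref{def:composition}, so neither your formal argument nor the paper's depends on it.
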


\begin{proof}
Each gadget acts on a distinct channel $c_i$, so their action
sets are pairwise disjoint and composition introduces no new
synchronization constraints. By Theorem~\ref{prop:finite},
each gadget is finite-state, so the product remains finite and
\spin exhaustively explores all interleavings, including
coordinated multi-channel attacks.
\end{proof}

\subsection{Case Study: The Alternating Bit Protocol}
\label{sec:case-study-abp}

The alternating bit protocol (ABP) is a simple data-link layer protocol
that provides reliable, in-order delivery over an unreliable channel.
A sender tags each data frame with an alternating sequence bit and
retransmits until it receives a matching acknowledgement; upon
receipt, the receiver flips its expected bit and acknowledges.
ABP is a canonical example of a protocol designed to tolerate
message loss, and has seen formal verification using both
theorem provers and model checkers.

We constructed a \promela model of ABP referencing its
standard description.
The model consists of two processes, a sender and a receiver,
communicating over two asynchronous channels:
\texttt{AtoB} (data) and \texttt{BtoA} (acknowledgements),
each with buffer capacity 1.
Messages are drawn from the domain
$\{\texttt{DATA0}, \texttt{DATA1}, \texttt{ACK0}, \texttt{ACK1}\}$.
We formalize the following liveness property:
\[
  \phi_{\mathit{ABP}} = \mathbf{G}\,\bigl((\mathit{packetA} \neq \mathit{packetB})
  \implies \mathbf{F}\,(\mathit{packetA} = \mathit{packetB})\bigr)
\]
\noindent That is, whenever the sender and receiver disagree on the
current data frame, they eventually reach agreement.
Our \promela model satisfies $\phi_{\mathit{ABP}}$ in the absence of
any attacker.

We evaluated the ABP model against \korg's drop, replay, and
reorder attacker models on both individual and combined channels.
The drop and replay limits were set to $\ell = 2$, and the reorder
limit was set to $\ell = 2$. Our results are summarized in
Table~\ref{tab:abp-results}.

\begin{table}[h]
\vspace{-1em}
\centering
\caption{Results of \korg analysis on ABP for $\phi_{\mathit{ABP}}$.
\checkmark{} indicates \korg proved the absence of an attack
via exhaustive state-space search.}
\label{tab:abp-results}
\smallskip
\begin{tabular}{@{} l ccc @{}}
\toprule
\textbf{Victim Channel(s)} & \textbf{Drop} & \textbf{Replay} & \textbf{Reorder} \\
\midrule
\texttt{AtoB}                & \checkmark & \checkmark & \checkmark \\
\texttt{BtoA}                & \checkmark & \checkmark & \checkmark \\
\texttt{AtoB} + \texttt{BtoA} & \checkmark & \checkmark & \checkmark \\
\bottomrule
\end{tabular}
\vspace{-1em}
\end{table}

In all nine configurations, \korg completed an exhaustive
state-space search and certified the absence of attacks.
This result is expected: ABP's retransmission mechanism
ensures that dropped messages are resent, replayed messages
are discarded by the sequence-bit check, and reordered
messages (within a single-capacity buffer) do not affect
correctness. \korg's analysis confirms mechanically what is
known informally about ABP's resilience to channel faults,
and serves as a validation of \korg's gadget constructions
on a well-understood protocol.

\end{document}